\documentclass[sigconf,nonacm,10pt,margin=1in]{acmart}
\AtBeginDocument{%
  }

\usepackage[linesnumbered, ruled]{algorithm2e}
\usepackage[noend]{algpseudocode}

\SetCommentSty{mycommfont}
\SetKwInput{KwInput}{Input}                
\SetKwInput{KwOutput}{Output}              

\usepackage{booktabs}
\usepackage{multirow}

\usepackage{graphicx}
\usepackage{subcaption}
\usepackage{stfloats}

\usepackage{adjustbox}

\usepackage{tikz}
\usepackage[framemethod=TikZ]{mdframed}
\mdfdefinestyle{MyFrame}{%
	linecolor=black,
	outerlinewidth=1pt,
	roundcorner=0pt,
	innerrightmargin=5pt,
	innerleftmargin=5pt,
	backgroundcolor=white
}
\usepackage[skip=8pt]{caption}
\usepackage{thm-restate}

\newtheorem{claim}{Claim}
\newtheorem{definition}{Definition}

\newcounter{example}

\usepackage{xspace}
\newcommand{\oursystem}{\textsf{PostRI}\xspace}
\newcommand{\DomainSize}{\ensuremath{N}}
\newcommand{\Domain}{\ensuremath{\mathcal{Y}}}
\newcommand{\StepSize}{\ensuremath{s}}
\newcommand{\MaxUtility}{\ensuremath{y^*(D)}}
\newcommand{\MaxRIUtility}{\ensuremath{b^*(D)}}
\newcommand{\Dataset}{\ensuremath{D}}
\newcommand{\DataSize}{\ensuremath{n}}
\newcommand{\EMBound}{\ensuremath{\gamma}}

\newcommand{\MedianSensitiv}{\ensuremath{\Delta_u}}
\newcommand{\MUtilSymbol}{\ensuremath{u}}
\newcommand{\MedianUtility}[1]{\ensuremath{\MUtilSymbol(D, #1)}}
\newcommand{\RIUtilSymbol}{\ensuremath{q}}
\newcommand{\RIUtility}[1]{\ensuremath{\RIUtilSymbol(D, #1)}}
\newcommand{\RISensitiv}{\ensuremath{\Delta_q}}
\newcommand{\Rank}[1]{\ensuremath{\text{R}(D, #1)}}

\newcommand{\TrueMedian}{\ensuremath{m}}
\newcommand{\NoisyMedian}{\ensuremath{o}}
\newcommand{\RIStep}{\ensuremath{b}}
\newcommand{\Lipshitz}{\ensuremath{\ell}}
\newcommand{\Helper}{\ensuremath{f}}
\newcommand{\HelperFunction}[1]{\ensuremath{\Helper_{#1}(\Dataset)}}
\newcommand{\ExpMech}[2]{\ensuremath{Exp_{#2}(\Dataset,{#1})}}

\newcommand{\mc}[1]{\mathcal{#1}}

\usepackage{amsmath}

\newcommand{\eat}[1]{}

\newcommand{\squishlist}{
	\begin{list}{$\bullet$}
		{
			\setlength{\itemsep}{0pt}
			\setlength{\parsep}{3pt}
			\setlength{\topsep}{3pt}
			\setlength{\partopsep}{0pt}
			\setlength{\leftmargin}{1.5em}
			\setlength{\labelwidth}{1em}
			\setlength{\labelsep}{0.5em} } }

\newcommand{\squishend}{
	\end{list}  }

\usepackage{enumitem}

\usepackage{mathtools}

\begin{document}

\title{Interpreting the Error of Differentially Private Median Queries through Randomization Intervals}

\author{\large Thomas Humphries\textsuperscript{1}, Tim Li\textsuperscript{1}, Shufan Zhang\textsuperscript{1}, Karl Knopf\textsuperscript{1}, Xi He\textsuperscript{1,2}}
\affiliation{%
  \institution{\textsuperscript{1}University of Waterloo, \textsuperscript{2}Vector Institute}
  \country{\textit{\{thomas.humphries, tlli, shufan.zhang, karl.knopf, xi.he\} @uwaterloo.ca}}
}

\begin{abstract}

It can be difficult for practitioners to interpret the quality of differentially private (DP) statistics due to the added noise.
One method to help analysts understand the amount of error introduced by DP is to return a Randomization Interval (RI), along with the statistic.
A RI is a type of confidence interval that bounds the error introduced by DP.
For queries where the noise distribution depends on the input, such as the median, prior work degrades the quality of the median itself to obtain a high-quality RI.
In this work, we propose \oursystem, a solution to compute a RI after the median has been estimated.
\oursystem enables a median estimation with 14\%-850\% higher utility than related work, while maintaining a narrow RI.

\end{abstract}

\maketitle
\pagestyle{plain}
\section{Introduction}
A core functionality of database management software is allowing analysts to query the database to learn aggregate statistics.
Naively allowing analysts to obtain unperturbed statistics can open up a lucrative attack surface over sensitive data~\cite{DinurN03}.
A popular defence to such attacks is Differential Privacy (DP)~\cite{Dwork06differentialprivacy}.
By adding a calibrated noise to each query result, the data owner enjoys a formal guarantee of privacy over the result of the queries.
A challenge with DP is that it requires a certain level of expertise to apply correctly, as well as to interpret the results of a DP query.

In this work, we focus on the problem of interpreting the level of error introduced by DP to satisfy privacy.
One method to help analysts understand this error is to return a confidence interval alongside the noisy result.
Specifically, an upper and lower bound on the noisy statistic such that the true answer is contained in the interval with high confidence (e.g., $95\%$).
To avoid confusion with confidence intervals from statistics that bound sampling error, we introduce the term \emph{randomization intervals (RI)} for a confidence interval that bound only the randomness of the DP mechanism itself.
These randomization intervals allow the analyst to reason about the error of the query without understanding the DP mechanism.
For example, if the interval is small, the analyst can be confident that the result is accurate.
Conversely, consider an analyst conducting a count query to determine the number of people in the dataset with a specific attribute. 
If the RI for this query contains zero, the user may not trust that there are any people in the dataset with this attribute.

For queries such as sums and counts, where the randomness is independent of the input, one can use a tail bound to compute the RI at no additional privacy cost.
However, this method does not extend to more general queries, such as the median, that require a data-dependent DP noise distribution (e.g. the exponential mechanism) for good utility.
Sun et al. were the first to study the problem of randomization intervals for the median (and other queries)~\cite{sun2023confidence}.
Instead of returning a high utility estimation for the median along with an estimated RI, Sun et al. focus on estimating the RI first, and then post-process the median as the average of the RI bounds.
The challenge with this approach is that the utility of the median becomes significantly degraded as estimating an RI has an inherently larger error factor to ensure high confidence.
Furthermore, the private data is not guaranteed to be evenly distributed within the RI, making the center of the RI a poor estimation of the median.

We introduce \oursystem, a new approach to computing the RI alongside the median, while maintaining the high utility of the classic DP median computation.
\oursystem first computes the DP median using an unmodified exponential mechanism approach. 
Then, using our single-shot, low-sensitivity utility function, we estimate the RI using a tunable amount of privacy budget left over from the median.
\oursystem outputs an RI of similar width to Sun et al.~\cite{sun2023confidence}, while significantly improving the error of the median query itself.
We prove the privacy and correctness of \oursystem and derive optimal values for hyperparameters such as the ratio of privacy budget between the median and RI.
We evaluate \oursystem over real-world datasets and find a 14\%-850\% improvement in the average median error over~\citet{sun2023confidence}, while maintaining approximately the same or a slightly larger RI width, depending on the dataset.

\section{Background}
Differential privacy (DP)~\cite{Dwork06differentialprivacy} guarantees that an algorithm's output is approximately the same, regardless of the participation of any single user.
More formally, differential privacy can be defined as follows.
\begin{definition}[Differential Privacy (DP)\label{def:dp}]
A randomized algorithm $M: \mc{D} \rightarrow  \mc{O}$ 
is $\epsilon$-DP if for any pair of neighboring databases $\Dataset, \Dataset' \in \mc{D}$, and all $O \subseteq \mc{O}$ we have $\Pr[M(D) \in O] \leq e^\epsilon \Pr[M(D') \in O].$
\end{definition}
We use the bounded neighbouring definition where datasets are neighbours if they differ in the replacement of a single record, $|\Dataset \cap \Dataset'| = \DataSize - 1$. 
Note that if we apply a DP mechanism(s) sequentially, the privacy parameters are composed through summation or more advanced methods~\cite{dptextbook14}.

The exponential mechanism, introduced by McSherry and Talwar, is a general-purpose DP mechanism that maximizes a given utility function privately~\cite{mcsherry_2007}. 
\begin{definition}[Exponential Mechanism~\cite{mcsherry_2007}]\label{defn:exp_mech}
Given privacy budget $\epsilon$ and utility function $u: \mc{D} \times \Domain \to \mathbb{R}$ with sensitivity $\MedianSensitiv:= \max_{\Dataset, \Dataset'\in \mc{D}, y \in \Domain}\left|u(\Dataset, y) - u(\Dataset', y)\right|,$
the \emph{exponential mechanism} $\ExpMech{\epsilon}{u}$ outputs a sample $y\in\Domain$, with the following probability $Pr[y] = \exp\left( \frac{\epsilon u(\Dataset, y)}{2\MedianSensitiv}\right)/\sum \limits_{i\in\Domain} \exp\left( \frac{\epsilon u(\Dataset, i)}{2\MedianSensitiv}\right)$.
\end{definition}
The exponential mechanism, as defined above, guarantees $\epsilon$-differential privacy~\cite[Theorem 6]{mcsherry_2007}. 


\subsection{Problem Setup}
\begin{definition}[Randomization Interval]
Given a dataset $\Dataset$ of size $\DataSize$, privacy budget $\epsilon$, and a failure probability $\beta$, we wish to output a  triple $(l, \NoisyMedian, u)$ called a \emph{Randomization Interval (RI)} such that:
\begin{itemize}
    \item outputting $(l, \NoisyMedian, u)$ satisfies $\epsilon$-DP.
    \item $\NoisyMedian$ is a DP estimate of the true median $\TrueMedian$ of dataset $\Dataset$.
    \item With probability $1-\beta$, $\TrueMedian\in[l,u]$.
\end{itemize} 
\end{definition}
Intuitively, we wish to give a lower and upper bound on the noisy median estimate to make the error of the noisy median more interpretable. 
Another way to state this goal is that we are computing a confidence interval with respect to the error introduced by differential privacy (we clarify the relation to related work on DP confidence intervals in Appendix~\ref{sec:related_work}).
We consider computing the true median to be deterministic in this work.

For simplicity, we assume $\Domain$ is an integer domain of size $\DomainSize$, denoted $\Domain = [\DomainSize]=\{0,1,\dots,\DomainSize\}$.
We also assume the utility function $u$ is $1$-Lipshitz,  that is, $|\MedianUtility{y} - \MedianUtility{y+1}|\leq \Lipshitz=1$ for any $\Dataset\in \Domain^\DataSize, y\in \Domain$. In the case of the median, this implies that the dataset $\Dataset$ contains no repeated elements. For datasets that do not satisfy this assumption, we follow Sun et al.~\cite{sun2023confidence} and remap the data to an expanded domain of size $\DomainSize\cdot \DataSize$.
The de-duplicated domain can be found by mapping $\Dataset\in\left[\DomainSize\right]^\DataSize$ to $\tilde{\Dataset} \in \left[\DataSize \cdot \DomainSize \right]^\DataSize $ , where the $k$ repetitions of an element $x \in \Dataset$ are mapped to consecutive elements of the new domain: $\DataSize x, \DataSize x+1, \dots, \DataSize x+k-1$.

\section{\oursystem}
In this section, we first overview the design of \oursystem in the two main steps, the median estimation and then the randomization interval.
We then analyze the privacy and correctness of the complete algorithm and finally discuss the optimal hyperparameters.

\subsection{Median Mechanism}
The first step in our algorithm is to derive a differentially private median using a standard approach~\cite{li_DP_book}.
We give the details of this algorithm for completeness.
Since \oursystem operates in two parts, we must split the privacy budget between this median computation and the randomization interval.
We denote the privacy budget of the median as $\epsilon_1$ and the privacy budget of the RI as $\epsilon_2$, where $\epsilon = \epsilon_1+\epsilon_2$ and discuss how to set these parameters in Section~\ref{sec:hyperparams}.

The utility function we will use for the median is 
\begin{equation}\label{eq:median_utility}
    \MedianUtility{y} = -|\Rank{y} - \frac{|\Dataset|}{2}|
\end{equation}
where $\Rank{y}$ computes the number of data points less than or equal to $y$ after sorting the dataset $D$.

The utility function $\MedianUtility{y}$, has sensitivity of $1$, and thus applying the exponential mechanism satisfies $\epsilon_1$-DP ~\cite[Theorem 6]{mcsherry_2007}.
In the case where multiple values in the domain $\Domain$ have the same utility, it is common to weight the probability of selection by the number of elements with this utility, and then randomly select a candidate after~\cite{joint_exponential}. We do this for the median since there can be many domain values between each data point, all with the same rank.

\subsubsection{Median Utility}\label{sec:median_util}
There is a well-established utility bound on the exponential mechanism that we can apply to get a utility bound on the median~\cite{dptextbook14}.
\begin{theorem}\label{thm:median_utility}
    Let $\MaxUtility = argmax_{y \in \Domain} (\MedianUtility{y})$. 
Then, with probability $1-\beta_1$ the following statement holds~\cite{dptextbook14}:
\begin{equation}
    \MedianUtility{\NoisyMedian} \geq \MedianUtility{\MaxUtility}-\EMBound_1
\end{equation} where 
\begin{equation}\label{eqn:bound_em_1}
    \EMBound_1 = \frac{2\MedianSensitiv}{\epsilon_1} \log{\frac{\DomainSize}{\beta_1}}
\end{equation}
\end{theorem}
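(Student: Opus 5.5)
We prove Theorem~\ref{thm:median_utility} with the standard union-bound argument for the exponential mechanism. Write $\MedianUtility{\MaxUtility} = \mathrm{OPT}$ and call an output $y\in\Domain$ \emph{bad} if $\MedianUtility{y} < \mathrm{OPT} - \EMBound_1$. We bound the total probability of all bad outputs by $\beta_1$.

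\textbf{Step 1 (one bad output).} By Definition~\ref{defn:exp_mech}, the normalizing sum is at least the single term for $\MaxUtility$, namely $\exp\left(\frac{\epsilon_1 \mathrm{OPT}}{2\MedianSensitiv}\right)$. So for any bad $y$,
\begin{equation*}
\Pr[\NoisyMedian = y] \le \frac{\exp\left(\frac{\epsilon_1 \MedianUtility{y}}{2\MedianSensitiv}\right)}{\exp\left(\frac{\epsilon_1 \mathrm{OPT}}{2\MedianSensitiv}\right)} < \exp\left(-\frac{\epsilon_1 \EMBound_1}{2\MedianSensitiv}\right).
\end{equation*}

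\textbf{Step 2 (all bad outputs).} There are at most $|\Domain|$ bad outputs, and $|\Domain|$ is $\DomainSize$ up to an additive constant. A union bound gives
\begin{equation*}
\Pr[\text{bad}] < \DomainSize \exp\left(-\frac{\epsilon_1 \EMBound_1}{2\MedianSensitiv}\right).
\end{equation*}
Setting $\EMBound_1 = \frac{2\MedianSensitiv}{\epsilon_1}\log\frac{\DomainSize}{\beta_1}$ makes the right-hand side equal to $\beta_1$. Therefore, with probability at least $1-\beta_1$, $\MedianUtility{\NoisyMedian} \ge \mathrm{OPT} - \EMBound_1$, as claimed.

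\textbf{Main obstacle (the weighted variant).} The main subtlety is the tie-handling used in the median mechanism: values are grouped by equal utility, each group is weighted by its size, and an element is then chosen uniformly within the group. This variant induces exactly the same distribution over individual $y\in\Domain$ as Definition~\ref{defn:exp_mech}, because a group of $k$ elements gets weight $k\exp(\cdot)$ and each member then has probability $1/k$ of being chosen. So Steps 1 and 2 carry over unchanged.

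The other point to handle is the domain convention $\{0,\dots,\DomainSize\}$, which has $\DomainSize+1$ elements. The optimum itself is never bad, so at most $\DomainSize$ outputs can be bad, and the union bound uses exactly $\DomainSize$, matching the $\log(\DomainSize/\beta_1)$ in $\EMBound_1$. Equivalently, this is the standard bound from~\cite{dptextbook14} instantiated with $\MedianSensitiv = 1$.
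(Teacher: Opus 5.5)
Your proof is correct and is the standard union-bound argument for the exponential mechanism; the paper does not prove this bound itself but cites it from Dwork and Roth. Your side remarks are also right: the size-weighted tie-breaking induces the same distribution over individual outputs, and excluding the optimum leaves at most $\DomainSize$ bad outputs in $\{0,\dots,\DomainSize\}$, which yields $\log(\DomainSize/\beta_1)$ rather than $\log((\DomainSize+1)/\beta_1)$.
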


\subsection{RI Mechanism Design}

Since there exists a bound on the error of the exponential mechanism (Theorem~\ref{thm:median_utility}), constructing a randomization interval for the median seems simple.
However, this error bound is in terms of the utility function, which presents several problems. 
First, we do not have the utility value of the outputted median as publishing the utility requires additional privacy budget.
Second, even if we publish the utility, the value is relative and is not useful without knowledge of the private dataset.
In the case of the median, a randomization interval using the utility bound would only tell us the rank error.
For example, we could bound the median in a range of up to ten dataset positions, but depending on the dataset, the closest ten data points could be numerically very far away from the median.
Thus, we want to give a randomization interval in the data domain.

A strawman solution could be to compute the utility bound and then map it back onto the data domain.
However, this would use a significant amount of privacy budget, paying both to compute the utility bound and select the values from the data domain.
We instead develop a novel utility function to output the randomization interval in a single application of the exponential mechanism.

In order to save additional privacy budget, we first reformulate the randomization interval problem as the private selection of a single value $\RIStep$ (rather than a separate upper and lower bound).
\begin{mdframed}[style=MyFrame]
    \textbf{Simplified Randomization Interval Problem:} 
    Given an output median $\NoisyMedian$, find a minimal $\RIStep$ such that with probability $1-\beta_2$, while satisfying $\epsilon_2$-DP.
    \begin{center}
        $\TrueMedian \in [\NoisyMedian-\RIStep, \NoisyMedian+\RIStep]$
    \end{center}
    where $\TrueMedian$ is the ground truth median.

\end{mdframed}
Using this simplified formulation, we can then design a utility function for the optimal $\RIStep$. Our first step is to create a helper function $\Helper$ that measures the worst-case coverage of a given $\RIStep$ in terms of rank using dataset $\Dataset$. We define this function as $\HelperFunction{\RIStep} = \min (| \Rank{\NoisyMedian+\RIStep} - \Rank{\NoisyMedian} |, | \Rank{\NoisyMedian} - \Rank{\NoisyMedian-\RIStep}|)$.
Intuitively, $\Helper$ computes the smallest rank distance from the interval boundary to the median.
We note that we chose the above function for two reasons.
First, it allows us to consider both sides of the interval in one shot.
Second, it has sensitivity one (shown in Appendix~\ref{sec:proof_RI_Privacy}), making it very efficient to compute privately.

Recall the goal is to construct a valid randomization interval by making $\RIStep$ large enough that the true median $\TrueMedian$ is contained in the interval with high probability.
To determine this width, we can use the utility bound in Theorem~\ref{thm:median_utility}.
Since we designed the helper function to measure rank distance, and the utility function of the median measures rank distance, we can simply find the minimum $\RIStep$ such that $\HelperFunction{\RIStep} > \EMBound_1$.
A natural way to privately select such a $\RIStep$ would be to use the sparse vector technique (SVT)~\cite{dptextbook14}.
SVT sequentially evaluates a given set of queries and halts (and outputs the query index) when the first query value exceeds this threshold.
In our case, the set of queries would be defined by evaluating the helper function $\Helper$ over a set of $\RIStep$ values and the threshold would be $\EMBound_1$.
However, it was shown by Lyu et al.~\cite{lyu_2017} that in a non-interactive setting such as this (the queries are known ahead of time), replacing SVT with the exponential mechanism gives better utility\footnote{Lyu et al. consider top-$k$ queries, but we find the same result holds for our threshold queries}.
Thus, in our work, we apply the exponential mechanism using a utility function that is the absolute difference between the query set and the threshold.

We note that in addition to the threshold $\EMBound_1$, we must account for two additional sources of error.
The first is due to the fact that we are selecting $\RIStep$ using another exponential mechanism.
We must ensure correctness under the worst-case error of this second exponential mechanism in selecting the error of the first.
Thus, we must incorporate a second error bound similar to Theorem~\ref{thm:median_utility} but with the following constant.
\begin{equation}\label{eqn:bound_em_2}
    \EMBound_2 = \frac{2\RISensitiv}{\epsilon_2} \log{\frac{\DomainSize}{\StepSize\beta_2}}
\end{equation} 
where $\StepSize$ is a quantization parameter determining the domain set we choose for $\RIStep$.
This quantization is the second source of error we must account for.
Specifically, if we consider all possible $\RIStep$ values, we have a larger domain that negatively affects error; if we pick a smaller domain, we must account for the quantization error and the new domain size.
We define the domain set to be $\RIStep \in \{\StepSize, 2\StepSize, \dots, \lfloor \frac{\StepSize \cdot \DomainSize}{\StepSize} \rfloor \}$.
This gives a domain size of $\DomainSize/\StepSize$ and introduces a quantization error.
Namely, $\RIStep$ can be at most $\StepSize$ away from an optimal value. 
Since $\RIStep$ is in the data domain and the utility is in the rank domain, we multiply $\StepSize$ by $\Lipshitz$ (the Lipshitz coefficient assumed to be 1 if no repeated data elements) to obtain a bound on this error in terms of utility. 

Putting all of this together, we get our final utility function.
\begin{equation}\label{eqn:em_ci_utility}
    \RIUtility{\RIStep} = -|\HelperFunction{\RIStep} - \EMBound_1 -\EMBound_2 - \StepSize\cdot\Lipshitz|
\end{equation}
We apply the exponential mechanism on $\RIUtilSymbol$ to obtain an $\epsilon_2$-DP RI mechanism.
\begin{restatable}{theorem}{RIPrivacy}\label{thm:RIPrivacy}
    $\RIUtilSymbol$ has a sensitivity of $1$ and thus applying the exponential mechanism satisfies $\epsilon_2$-DP.
\end{restatable}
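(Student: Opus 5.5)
The plan is to reduce the sensitivity of $\RIUtilSymbol$ to that of the helper $\Helper$, show the latter is at most $1$, and then invoke the exponential mechanism guarantee~\cite[Theorem 6]{mcsherry_2007}. Throughout, the published median $\NoisyMedian$ is treated as fixed. It is the output of an earlier $\epsilon_1$-DP step and enters the second mechanism only as public auxiliary input, so sequential composition accounts for it separately. The quantities $\EMBound_1$, $\EMBound_2$, $\StepSize$ and $\Lipshitz$ do not depend on $\Dataset$; they depend only on $\epsilon_1,\epsilon_2,\beta_1,\beta_2,\DomainSize,\StepSize$ and the assumed sensitivities. Hence $\RIUtility{\RIStep} = -|\HelperFunction{\RIStep} - c|$ for a data-independent constant $c$. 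By the reverse triangle inequality, $\big||a-c|-|a'-c|\big| \le |a-a'|$, so it suffices to show $|\HelperFunction{\RIStep} - \Helper_{\RIStep}(\Dataset')| \le 1$ for all neighbours $\Dataset,\Dataset'$ and every $\RIStep$.

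The next step bounds the change in rank differences. For fixed $\NoisyMedian$ and $\RIStep$, write $A(\Dataset) = \Rank{\NoisyMedian+\RIStep} - \Rank{\NoisyMedian}$. This counts the records in the interval $(\NoisyMedian, \NoisyMedian+\RIStep]$, and it is nonnegative because $\RIStep \ge \StepSize > 0$ and $\Rank{\cdot}$ is monotone. Similarly, $B(\Dataset) = \Rank{\NoisyMedian} - \Rank{\NoisyMedian-\RIStep}$ counts the records in $(\NoisyMedian-\RIStep, \NoisyMedian]$. Under the bounded (replacement) neighbouring relation, $\Dataset'$ removes one record $x$ and inserts one record $x'$. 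Removing $x$ can decrease the count of a fixed interval by at most one, and inserting $x'$ can increase it by at most one. This alone would give a change of up to $2$ if counts of different intervals could both move. For a single interval's count, though, the net change is in $\{-1,0,1\}$, since the removal contributes $0$ or $-1$ and the insertion contributes $0$ or $+1$. So $|A(\Dataset)-A(\Dataset')|\le 1$ and $|B(\Dataset)-B(\Dataset')|\le 1$. I should be careful to work with the counts directly rather than with the individual ranks $\Rank{\cdot}$. Each rank can move by $1$, so bounding the difference of two ranks term-by-term would naively give $2$; viewing the difference as a count over a half-open interval is what gives the bound $1$.

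Finally, the minimum is $1$-Lipschitz in the sup norm: $|\min(a,b)-\min(a',b')| \le \max(|a-a'|,|b-b'|)$. This gives $|\HelperFunction{\RIStep}-\Helper_{\RIStep}(\Dataset')|\le 1$, hence $\RISensitiv \le 1$, and the exponential mechanism with $\RISensitiv=1$ over the finite domain $\{\StepSize,2\StepSize,\dots\}$ is $\epsilon_2$-DP. The main obstacle I anticipate is the point above: justifying that a single interval count changes by at most one, as opposed to a per-rank bound of two. A secondary subtlety is the de-duplication remapping to the domain of size $\DataSize\cdot\DomainSize$, since a replacement there could shift the encodings of other duplicates. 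I would handle this by arguing that counts over intervals in the original domain are unaffected beyond the replaced record, or simply by stating the result under the paper's no-repetition assumption.
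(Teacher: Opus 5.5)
Your proposal is correct and follows essentially the same route as the paper: reduce the sensitivity of $\RIUtilSymbol$ to that of $\Helper$, note that each rank difference is a count over a half-open interval and so changes by at most one under replacement, observe that the minimum preserves this bound, and invoke the exponential mechanism. Your version is somewhat more careful than the paper's, since the reverse triangle inequality makes its ``equivalent'' precise and you state the Lipschitz property of the minimum explicitly; your de-duplication worry is also harmless, because removing one of $k$ copies of $x$ and adding a copy of $x'$ (with $k'$ existing copies) just replaces the single encoding $\DataSize x+k-1$ by $\DataSize x'+k'$, so neighbours map to neighbours.
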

We defer the proof of Theorem~\ref{thm:RIPrivacy} to Appendix~\ref{sec:proof_RI_Privacy}. In Appendix~\ref{sec:correctProof}, we prove the correctness of \oursystem. Namely:
\begin{restatable}{theorem}{RICorrect}\label{thm:RICorrect}
    For a given private median estimation $\NoisyMedian$ and randomization interval $[\NoisyMedian-\hat{\RIStep},\NoisyMedian+\hat{\RIStep}]$ output using \oursystem, with probability $1-\beta_1-\beta_2$, we have $\Rank{\NoisyMedian - \hat{\RIStep}}\leq n/2\leq \Rank{\NoisyMedian + \hat{\RIStep}}$
\end{restatable}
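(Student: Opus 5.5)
The plan is a union bound over the two exponential mechanisms, followed by a deterministic chain of inequalities that turns the two utility guarantees into the rank statement.

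\textbf{Two good events.} Let $E_1$ be the event from Theorem~\ref{thm:median_utility} that $\MedianUtility{\NoisyMedian} \geq \MedianUtility{\MaxUtility}-\EMBound_1$. Since $\MedianUtility{\MaxUtility}=0$ (or at least $-1/2$ up to parity; I would absorb this into the analysis or assume $n$ even), on $E_1$ we have $|\Rank{\NoisyMedian}-n/2|\leq \EMBound_1$. Let $E_2$ be the analogous event for the second exponential mechanism. It runs over the quantized domain of size $\DomainSize/\StepSize$ with sensitivity $\RISensitiv=1$ (Theorem~\ref{thm:RIPrivacy}). On $E_2$, $\RIUtility{\hat{\RIStep}} \geq \max_{\RIStep}\RIUtility{\RIStep} - \EMBound_2$, and $E_2$ fails with probability at most $\beta_2$ by the same standard bound. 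The union bound gives $\Pr[E_1\cap E_2]\geq 1-\beta_1-\beta_2$. We then argue deterministically on $E_1\cap E_2$.

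\textbf{Lower bound on $\HelperFunction{\hat{\RIStep}}$.} Write $T=\EMBound_1+\EMBound_2+\StepSize\Lipshitz$. First I need the optimum over the grid to be close to $0$, i.e.\ some grid $\RIStep$ has $|\HelperFunction{\RIStep}-T|\leq \StepSize\Lipshitz$. Because $R$ is $\Lipshitz$-Lipschitz, $\HelperFunction{\RIStep}$ moves by at most $\StepSize\Lipshitz$ between consecutive grid points. It starts near $0$ at small $\RIStep$ and reaches $\min(\Rank{\NoisyMedian}, n-\Rank{\NoisyMedian})$ at the largest $\RIStep$. On $E_1$ that value is at least $n/2-\EMBound_1$, and it must be $\geq T$, which I will assume as a nondegeneracy condition (otherwise the interval is the whole domain and the claim is trivial). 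A discrete intermediate value argument then gives $\max_{\RIStep}\RIUtility{\RIStep}\geq -\StepSize\Lipshitz$. Hence on $E_2$, $\RIUtility{\hat{\RIStep}}\geq -\StepSize\Lipshitz-\EMBound_2$, so $\HelperFunction{\hat{\RIStep}} \geq T-\StepSize\Lipshitz-\EMBound_2 = \EMBound_1$.

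\textbf{Conclusion.} By definition of $\Helper$ and monotonicity of $R$, this gives $\Rank{\NoisyMedian+\hat{\RIStep}}\geq \Rank{\NoisyMedian}+\EMBound_1$ and $\Rank{\NoisyMedian-\hat{\RIStep}}\leq \Rank{\NoisyMedian}-\EMBound_1$. Combining with $|\Rank{\NoisyMedian}-n/2|\leq\EMBound_1$ yields $\Rank{\NoisyMedian-\hat{\RIStep}}\leq n/2\leq \Rank{\NoisyMedian+\hat{\RIStep}}$.

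\textbf{Main obstacle.} The delicate step is the second one: showing the grid actually contains a $\RIStep$ with utility within $\StepSize\Lipshitz$ of $0$. This needs both the Lipschitz-step argument and the boundary case where $\Helper$ never reaches $T$. I would handle that case either by noting that $\Helper$ then saturates at $\min(\Rank{\NoisyMedian},n-\Rank{\NoisyMedian})$, which already makes the interval cover all ranks up to the extremes, or by stating the needed lower bound on $n$ explicitly.
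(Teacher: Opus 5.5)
Your proposal is correct and follows the paper's proof: a union bound over the two exponential-mechanism guarantees, then $f_{\hat b}(D)\ge\gamma_1$ combined with $|R(D,o)-n/2|\le\gamma_1$. You get both inequalities at once, where the paper splits on whether $R(D,o)\ge n/2$, and your discrete intermediate-value step makes explicit what the paper takes for granted when it sets $q(D,b^*(D))=0$.

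The one thing to drop is your first fallback for the degenerate case, the claim that $f$ saturates so the interval covers everything. If $f$ never reaches $\gamma_1+\gamma_2+s\ell$, the mechanism is not forced to pick the largest $b$; for example, if $f\equiv 0$ then $\hat b$ is uniform on the grid, and the statement can genuinely fail. So keep your second option, an explicit lower bound on $n$ such as $n/2-\gamma_1\ge\gamma_1+\gamma_2+s\ell$, which the paper's proof also assumes implicitly.
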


\subsection{Hyperparameter Selection}\label{sec:hyperparams}
\oursystem has two hyperparameters that can be varied. The first is the split between $\epsilon_1$ and $\epsilon_2$. That is, more budget can be allotted to reduce the error of the median or to shorten the length of the RI. However, we note that one can not arbitrarily shorten the RI as it inherently depends on the error of the median in our approach. We derive the optimal split between $\epsilon_1$ and $\epsilon_2$ to give the shortest possible RI width in Appendix~\ref{sec:epsilonsplitting}. The result is
\begin{equation}\label{eqn:optimalEps}
    \epsilon_1 = \epsilon_2\sqrt{\frac{\log{\frac{\DomainSize}{\beta_1}}}{\log{\frac{\DomainSize}{\StepSize\beta_2}}}}
\end{equation}
The second parameter is the step size $\StepSize$, which determines the domain of $\RIStep$.
In Appendix~\ref{sec:optimalstepsize}, we derive the optimal parameter setting for $\StepSize$ such that the RI length is minimized:
\begin{equation}\label{eqn:optimalStep}
    \StepSize = \frac{2\RISensitiv}{\epsilon_2 \ell}.
\end{equation}
We note a circular dependence between the optimal choice of $\epsilon_1$, $\epsilon_2$, and $\StepSize$. In practice, we find that substituting these equations into each other iteratively converges to a stable value after a few iterations.

\section{Preliminary Experimental Results}






\begin{table}[t]
    \centering
    \caption{Comparison of Median RI on Real Datasets with $\epsilon = 1$ and $\beta = 0.01$.}
    \label{tab:comparison}
    \begin{adjustbox}{width=\columnwidth}
    \begin{tabular}{|c|c|c|c|c|}
    \hline
    Dataset 
    & Technique
    & Median Error
    & Average RI Width
    & Correctness \\
    \hline \hline 
    \multirow{2}{*}{Bank} & Ours & \textbf{0.06 ($\pm$ 0.24)} & 14.19 ($\pm$ 0.67) & 1.00 \\ \cline{2-5}
    & \cite{sun2023confidence}  & 0.57 ($\pm$ 0.62) & 13.63 ($\pm$ 0.55) & 1.00 \\ \hline
    \multirow{2}{*}{Adult} & Ours & \textbf{32.40 ($\pm$ 28.61)} & 1264.00 ($\pm$ 74.33) & 1.00 \\ \cline{2-5}
    & \cite{sun2023confidence}  & 166.88 ($\pm$ 17.23) & 1146.56 ($\pm$ 17.11) & 1.00 \\ \hline
    \multirow{2}{*}{Airplane} & Ours & \textbf{7.88 ($\pm$ 4.81)} & 13.13 ($\pm$ 2.58) & 1.00 \\ \cline{2-5}
    & \cite{sun2023confidence}  & 9.00 ($\pm$ 0.00) & 9.00 ($\pm$ 0.00) & 1.00 \\ \hline
    \end{tabular}
    \end{adjustbox}
\end{table}

To evaluate our method \oursystem, we conduct a study over three real-world datasets and compare with the existing approach by Sun et al.~\cite{sun2023confidence}.
We give implementation details in Appendix~\ref{sec:implementation}. We focus on the ``balance'' attribute (with values in range $[-8019, 102127]$ and true median value of 448) for the Banking dataset, the ``fnlwgt'' attribute (with values in range $[12285,1490400]$ and true median value of 178144.5) for the Adult dataset, and the ``capacity'' attribute (with values in range $[4,396]$ and true median value of 162) for the Airplane dataset in the evaluation.
We fix $\StepSize$ following Eqn~\ref{eqn:optimalStep}.
We use the average median error (i.e., numerical distance from true median), the average RI width (i.e., the distance between the lower and upper bounds), and the observed correctness (i.e., the percentage of how many times the true median is inside the reported RI) over 100 runs as evaluation metrics.
Table~\ref{tab:comparison} shows the comparison results with privacy budget $\epsilon=1$ and $\beta=0.01$.
Our \oursystem method (default setting, with $\epsilon_1 = \epsilon_2=\frac{1}{2}\epsilon$)  improves the average median error over~\citet{sun2023confidence} by 14\%-850\% with a moderately wider RI width by 3\%-35\%. 
The RI results reported by both methods contain the true median 100\% of the time in our experiments, implying an observed correctness rate of 1.

We also measure the average median error and RI width with varying privacy budgets ($\epsilon=0.25, 0.5, 1, 2, 4$) across datasets. 
We compare \citet{sun2023confidence} and \oursystem with different budget splitting: default ($\epsilon_1 = \epsilon_2$), optimal (cf. Eq.~\ref{eqn:optimalEps}), and median-focused ($\epsilon_1 = 9\epsilon_2$). 
As shown in Fig.~\ref{fig:varying_eps_banking}-\ref{fig:varying_eps_airplane}, our \oursystem has lower median error over all privacy regimes on these datasets and yields a comparable RI width as the method in \citet{sun2023confidence}, which matches our theoretical analysis (Appendix~\ref{sec:utility_analysis}).
The shaded areas depict the standard deviation across 100 runs.
Notably, an advantage of our approach is being able to tune the privacy budget split in favour of the median.
Experiments on the Banking dataset (Fig.~\ref{fig:varying_eps_banking}) show that the median-focused split (i.e., 90\% of the budget is spent on the median) can achieve the most accurate result on high privacy regimes, albeit with a wider RI width, compared to the default setting and the optimal split that minimizes the overall error.

\begin{figure}[tbp]
  \centering
  \begin{subfigure}[t]{0.48\columnwidth}
    \centering
    \includegraphics[width=\linewidth]{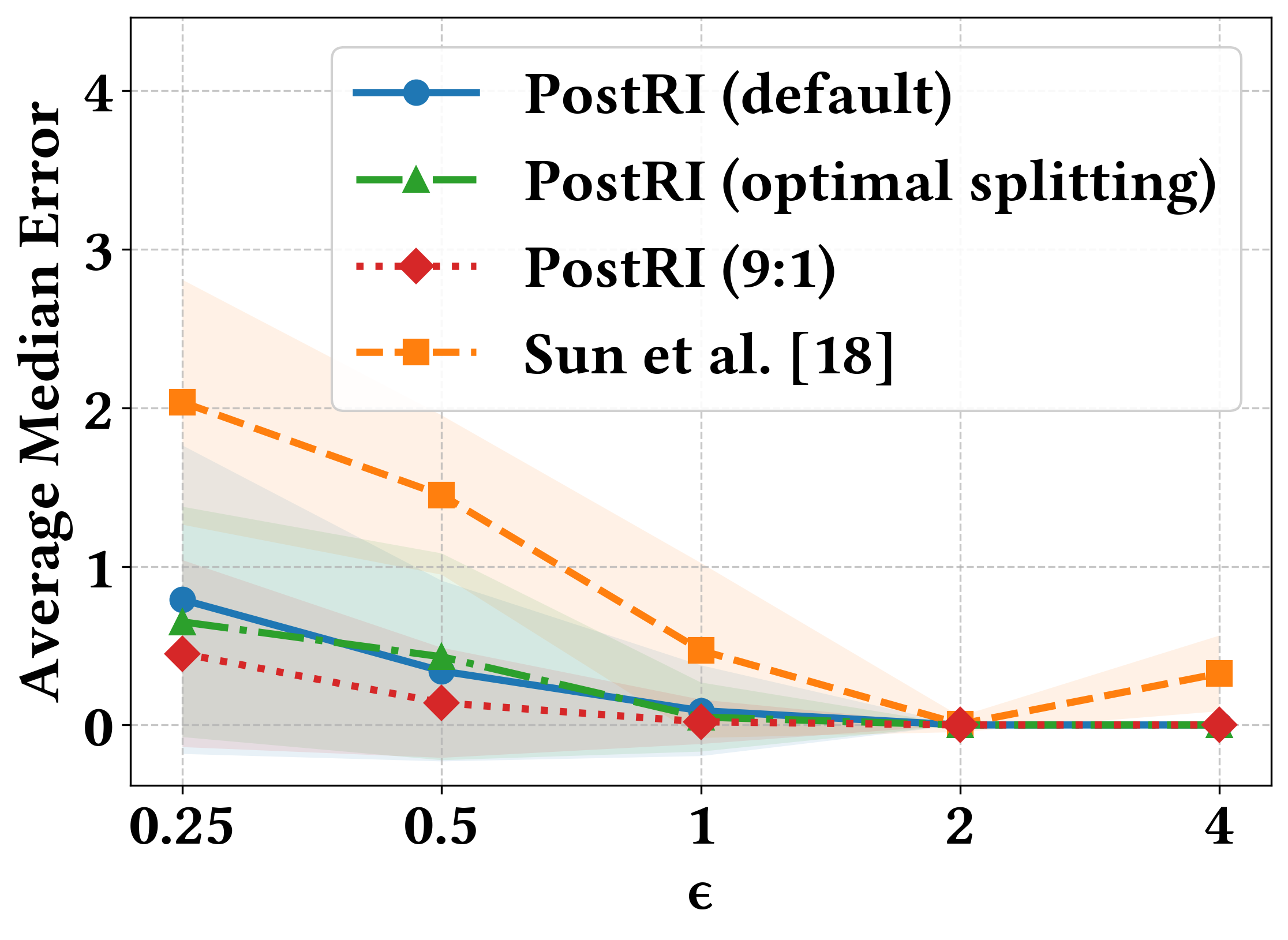}
  \end{subfigure}\hfill
  \begin{subfigure}[t]{0.48\columnwidth}
    \centering
    \includegraphics[width=\linewidth]{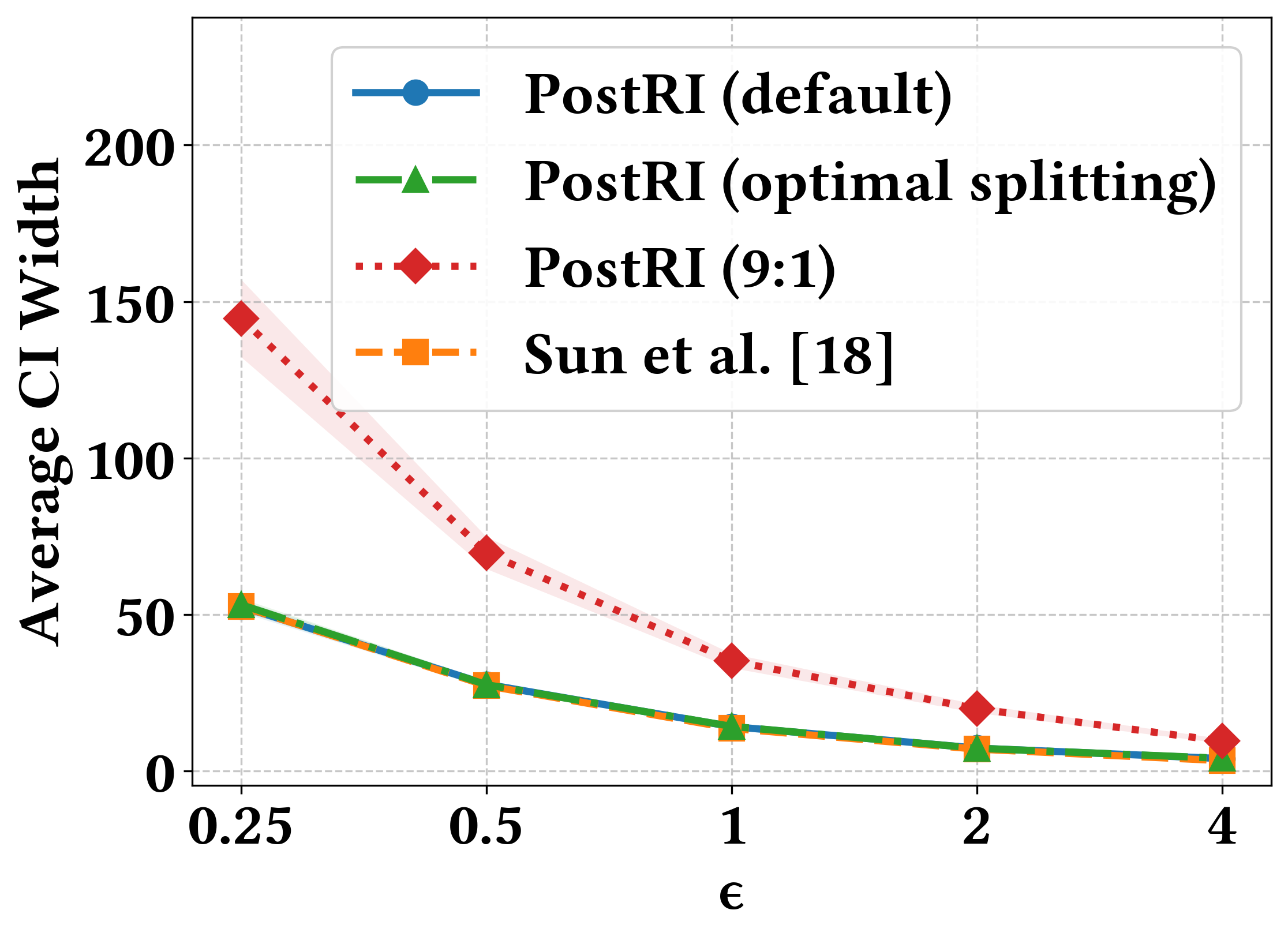}
  \end{subfigure}
  \caption{Average median error and RI length vs. varying privacy budget on the Banking dataset}
  \label{fig:varying_eps_banking}

  \begin{subfigure}[t]{0.48\columnwidth}
    \centering
    \includegraphics[width=\linewidth]{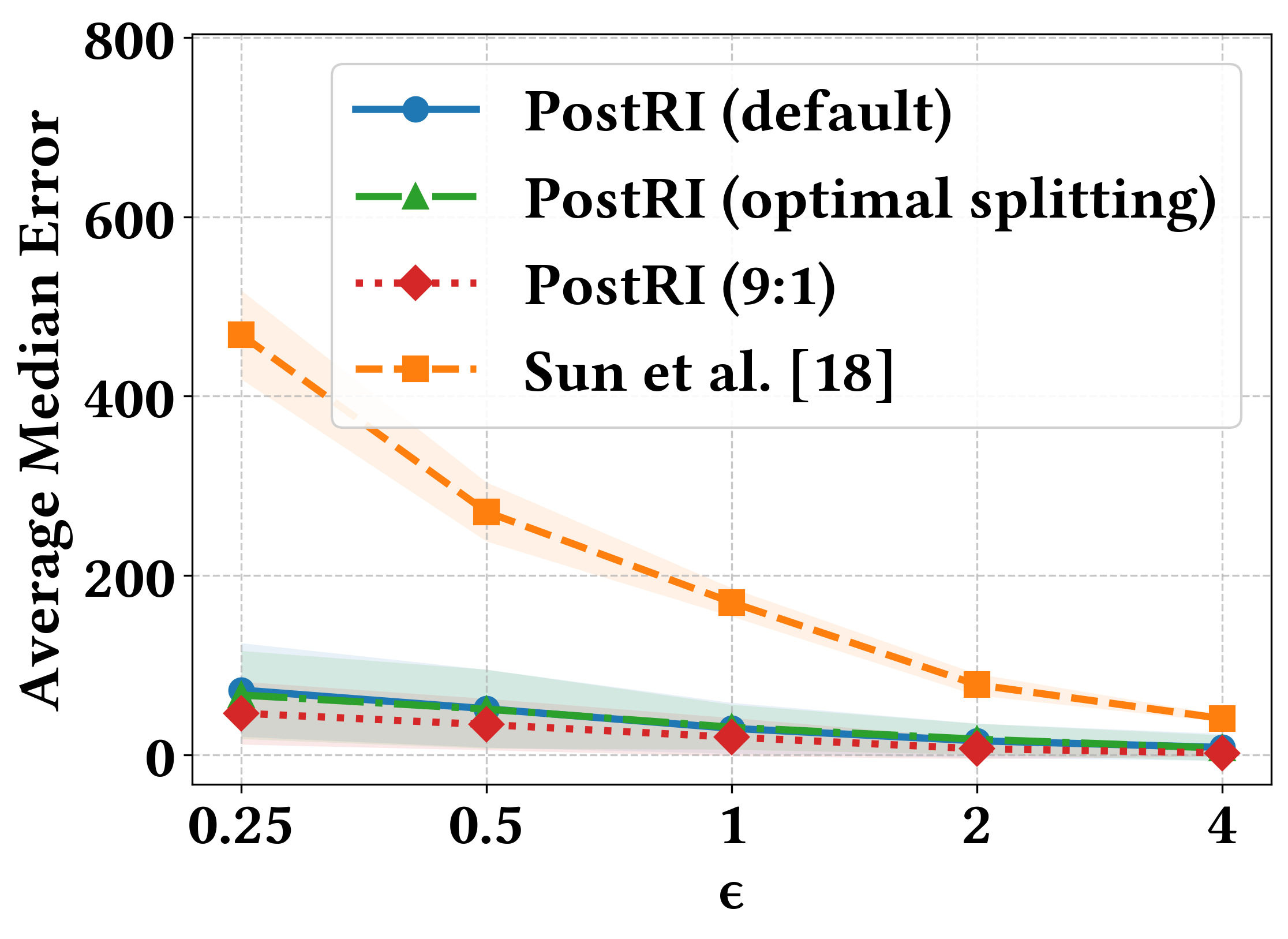}
  \end{subfigure}\hfill
  \begin{subfigure}[t]{0.48\columnwidth}
    \centering
    \includegraphics[width=\linewidth]{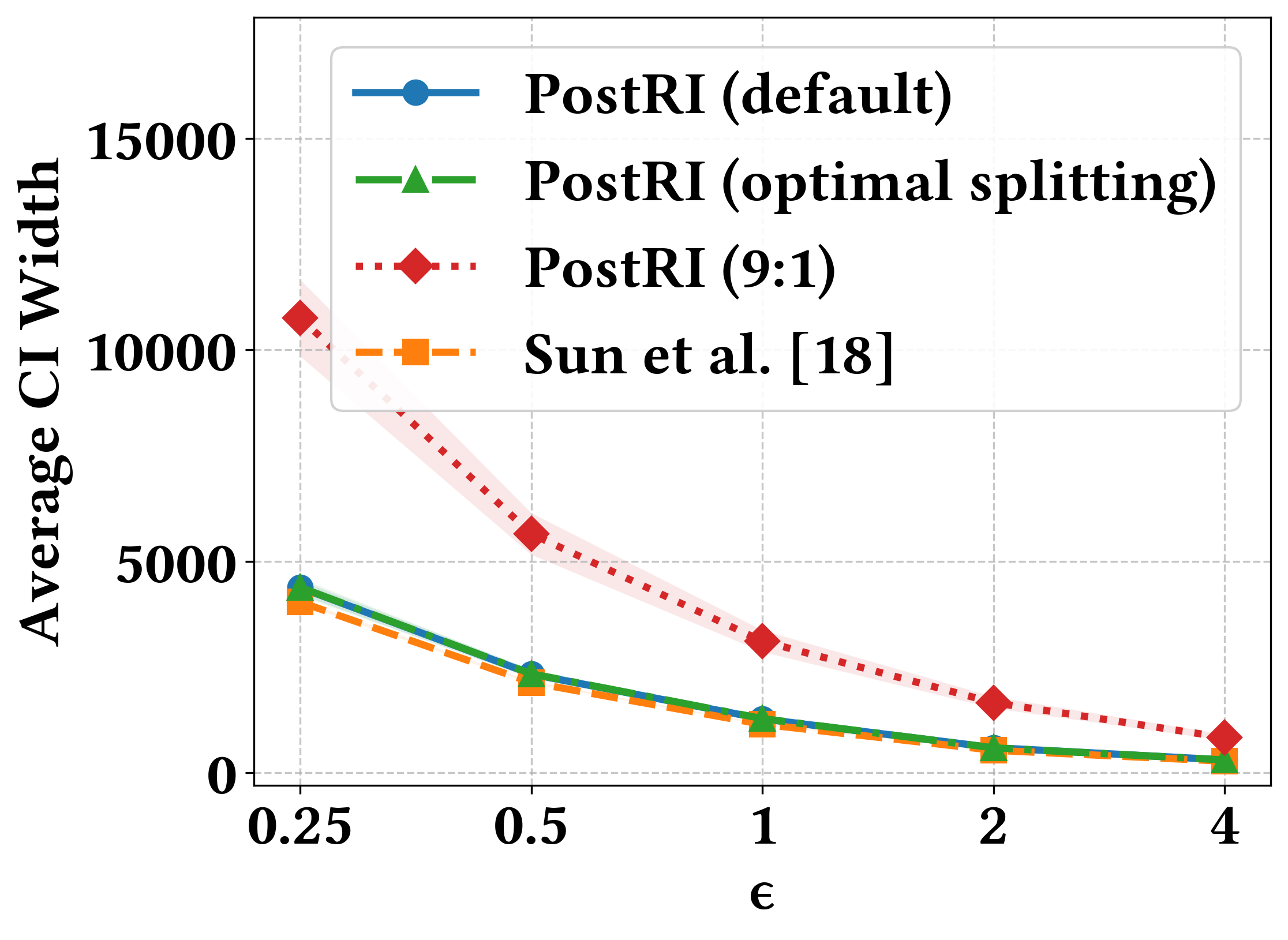}
  \end{subfigure}
  \caption{Average median error and RI length vs. varying privacy budget on the Adult dataset}
  \label{fig:varying_eps_adult}

  \begin{subfigure}[t]{0.48\columnwidth}
    \centering
    \includegraphics[width=\linewidth]{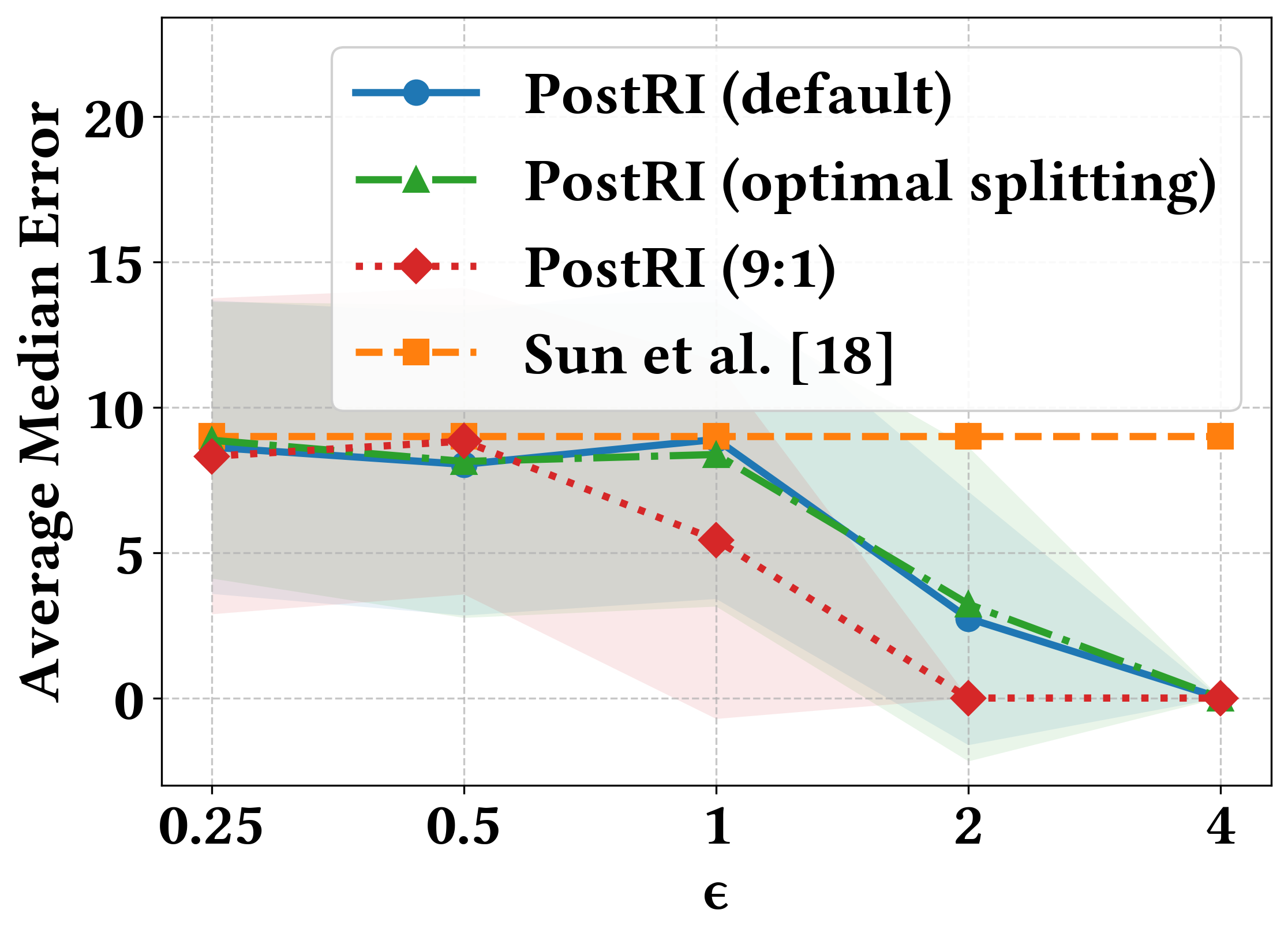}
  \end{subfigure}\hfill
  \begin{subfigure}[t]{0.48\columnwidth}
    \centering
    \includegraphics[width=\linewidth]{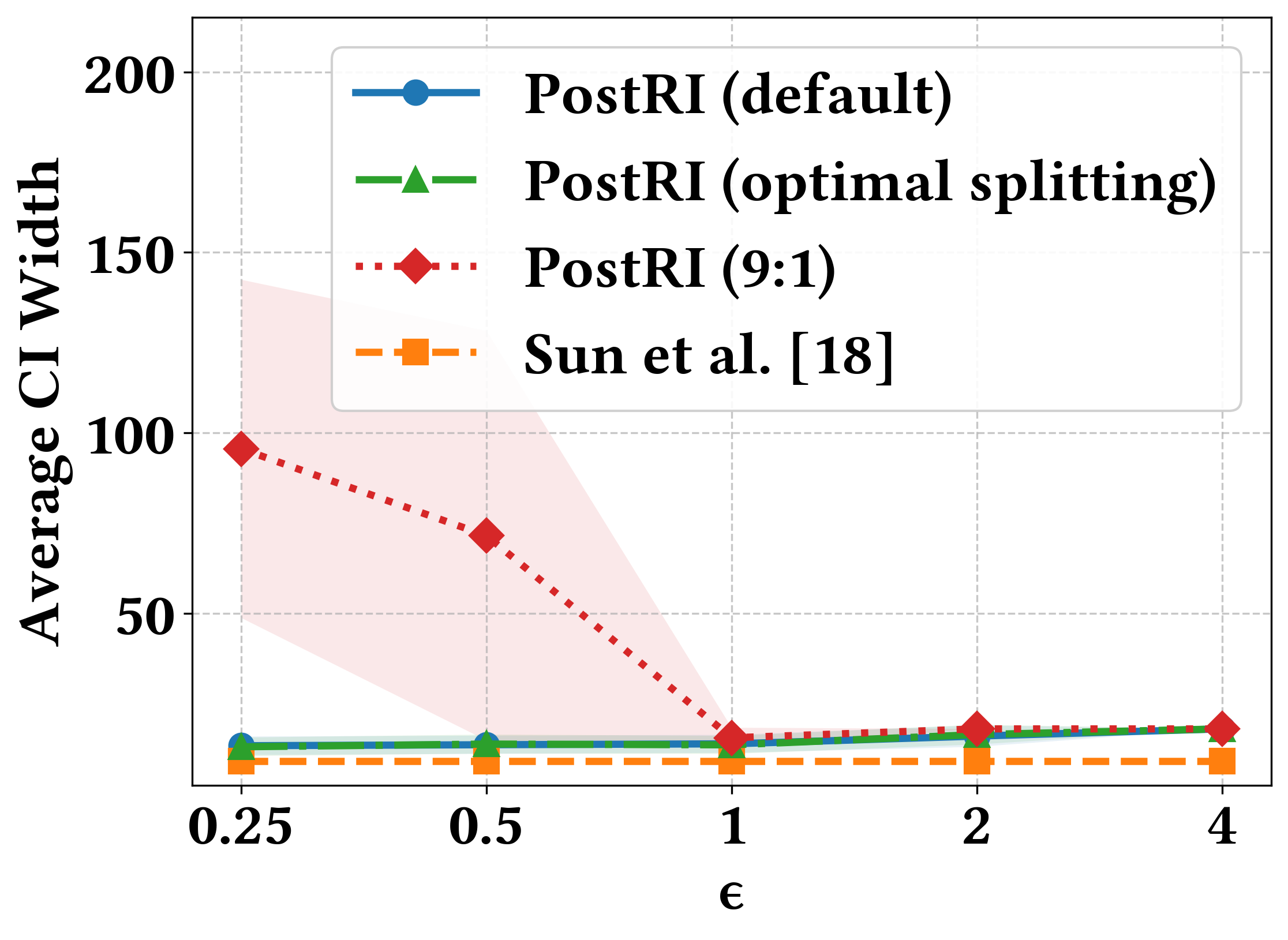}
  \end{subfigure}

  \caption{Average median error and RI length vs. varying privacy budget on the Airplane dataset}
  \label{fig:varying_eps_airplane}
\end{figure}




\section{Concluding Remarks}
In this work, we improve the state-of-the-art in computing randomization intervals for private median queries.
Our solution $\oursystem$ uses a novel utility function to enable a high utility median estimation along with a narrow randomization interval.
In the future, we look to extend this idea to other statistics.  

\section*{Acknowledgments}

This work was supported by NSERC through a Snowflake research fund, a Discovery Grant, and the Canada CIFAR AI Chairs program.

\bibliographystyle{ACM-Reference-Format}
\bibliography{tex/ref.bib}

@article{covington2021unbiased,
  title={Unbiased statistical estimation and valid confidence intervals under differential privacy},
  author={Covington, Christian and He, Xi and Honaker, James and Kamath, Gautam},
  journal={Statistica Sinica},
  year={to appear}
}

@article{dptextbook14,
 author = {Dwork, Cynthia and Roth, Aaron},
 title = {The Algorithmic Foundations of Differential Privacy},
 journal = {Found. Trends Theor. Comput. Sci.},
 year = {2014},
  publisher = {Now Publishers Inc.}
 }

@INPROCEEDINGS{Dwork06differentialprivacy,
    author = {Cynthia Dwork},
    title = {Differential privacy},
    booktitle = {IN ICALP},
    year = {2006},
    publisher = {Springer}
}

@article{nanayakkara2022visualizing,
  title={Visualizing privacy-utility trade-offs in differentially private data releases},
  author={Nanayakkara, Priyanka and Bater, Johes and He, Xi and Hullman, Jessica and Rogers, Jennie},
  journal={arXiv preprint arXiv:2201.05964},
  year={2022}
}

@inproceedings{hay2016principled,
  title={Principled evaluation of differentially private algorithms using dpbench},
  author={Hay, Michael and Machanavajjhala, Ashwin and Miklau, Gerome and Chen, Yan and Zhang, Dan},
  booktitle={Proceedings of the 2016 International Conference on Management of Data},
  pages={139--154},
  year={2016}
}

@inproceedings{xia2021dpgraph,
  title={Dpgraph: A benchmark platform for differentially private graph analysis},
  author={Xia, Siyuan and Chang, Beizhen and Knopf, Karl and He, Yihan and Tao, Yuchao and He, Xi},
  booktitle={Proceedings of the 2021 International Conference on Management of Data},
  pages={2808--2812},
  year={2021}
}

@article{liu2021catch,
  title={Catch a blowfish alive: a demonstration of policy-aware differential privacy for interactive data exploration},
  author={Liu, Jiaxiang and Knopf, Karl and Tan, Yiqing and Ding, Bolin and He, Xi},
  journal={Proceedings of the VLDB Endowment},
  volume={14},
  number={12},
  pages={2859--2862},
  year={2021},
  publisher={VLDB Endowment}
}

@article{sun2023confidence,
  title={Confidence Intervals for Private Query Processing},
  author={Sun, Dajun and Dong, Wei and Yi, Ke},
  journal={Proceedings of the VLDB Endowment},
  volume={17},
  number={3},
  pages={373--385},
  year={2023},
  publisher={VLDB Endowment}
}

@inproceedings{dwork2009differential,
  title={Differential privacy and robust statistics},
  author={Dwork, Cynthia and Lei, Jing},
  booktitle={Proceedings of the forty-first annual ACM symposium on Theory of computing},
  pages={371--380},
  year={2009}
}

@article{drechsler2022nonparametric,
  title={Nonparametric differentially private confidence intervals for the median},
  author={Drechsler, J{\"o}rg and Globus-Harris, Ira and Mcmillan, Audra and Sarathy, Jayshree and Smith, Adam},
  journal={Journal of Survey Statistics and Methodology},
  volume={10},
  number={3},
  pages={804--829},
  year={2022},
  publisher={Oxford University Press}
}

@inproceedings{cohen2024lower,
  title={Lower bounds for differential privacy under continual observation and online threshold queries},
  author={Cohen, Edith and Lyu, Xin and Nelson, Jelani and Sarl{\'o}s, Tam{\'a}s and Stemmer, Uri},
  booktitle={The Thirty Seventh Annual Conference on Learning Theory},
  pages={1200--1222},
  year={2024},
  organization={PMLR}
}

@article{ligett2025differentially,
  title={DIFFERENTIALLY PRIVATE NON-PARAMETRIC CONFIDENCE INTERVALS},
  author={Ligett, Katrina and Shenfeld, Moshe and Shoham, Tomer and Velner-Harris, Noa},
  journal={Journal of Privacy and Confidentiality},
  year={2025},
  
}

@article{chadha2024resampling,
  title={Resampling methods for private statistical inference},
  author={Chadha, Karan and Duchi, John and Kuditipudi, Rohith},
  journal={arXiv preprint arXiv:2402.07131},
  year={2024}
}

@article{lyu_2017,
  title = {Understanding the Sparse Vector Technique for Differential Privacy},
  author = {Lyu, Min and Su, Dong and Li, Ninghui},
  date = {2017-02},
  journaltitle = {Proc. VLDB Endow.},
  volume = {10},
  number = {6},
  pages = {637--648},
  publisher = {VLDB Endowment},
  issn = {2150-8097},
  doi = {10.14778/3055330.3055331},
  url = {https://doi.org/10.14778/3055330.3055331},
  issue_date = {February 2017},
  pagetotal = {12},
}

@inproceedings{mcsherry_2007,
  title = {Mechanism Design via Differential Privacy},
  booktitle = {Proceedings of the 48th Annual {{IEEE}} Symposium on Foundations of Computer Science},
  author = {McSherry, Frank and Talwar, Kunal},
  date = {2007},
  series = {{{FOCS}} '07},
  pages = {94--103},
  publisher = {IEEE Computer Society},
  location = {USA},
  doi = {10.1109/FOCS.2007.41},
  url = {https://doi.org/10.1109/FOCS.2007.41},
  isbn = {0-7695-3010-9},
}

@book{li_DP_book,
  title={Differential privacy: From theory to practice},
  author={Li, Ninghui and Lyu, Min and Su, Dong and Yang, Weining},
  year={2017},
  publisher={Springer}
}

@inproceedings{joint_exponential,
  title = {A Joint Exponential Mechanism for Differentially Private Top-{{k}}},
  booktitle = {Proceedings of the 39th International Conference on Machine Learning},
  author = {Gillenwater, Jennifer and Joseph, Matthew and Munoz, Andres and Diaz, Monica Ribero},
  editor = {Chaudhuri, Kamalika and Jegelka, Stefanie and Song, Le and Szepesvari, Csaba and Niu, Gang and Sabato, Sivan},
  year = {2022-07-17/2022-07-23},
  series = {Proceedings of Machine Learning Research},
  volume = {162},
  pages = {7570--7582},
  publisher = {PMLR},
  url = {https://proceedings.mlr.press/v162/gillenwater22a.html},
}

@article{du2020differentially,
  title={Differentially private confidence intervals},
  author={Du, Wenxin and Foot, Canyon and Moniot, Monica and Bray, Andrew and Groce, Adam},
  journal={arXiv preprint arXiv:2001.02285},
  year={2020}
}

@article{panavas2024illuminating,
  title={Illuminating the Landscape of Differential Privacy: An Interview Study on the Use of Visualization in Real-World Deployments},
  author={Panavas, Liudas and Sarker, Amit and Di Bartolomeo, Sara and Sarvghad, Ali and Dunne, Cody and Mahyar, Narges},
  journal={IEEE Transactions on Visualization and Computer Graphics},
  year={2024},
  publisher={IEEE}
}

@inproceedings{DinurN03,
  author       = {Irit Dinur and
                  Kobbi Nissim},
  editor       = {Frank Neven and
                  Catriel Beeri and
                  Tova Milo},
  title        = {Revealing information while preserving privacy},
  booktitle    = {Proceedings of the Twenty-Second {ACM} {SIGACT-SIGMOD-SIGART} Symposium
                  on Principles of Database Systems, June 9-12, 2003, San Diego, CA,
                  {USA}},
  pages        = {202--210},
  publisher    = {{ACM}},
  year         = {2003},
  url          = {https://doi.org/10.1145/773153.773173},
  doi          = {10.1145/773153.773173},
  bibsource    = {dblp computer science bibliography, https://dblp.org}
}


\appendix

\section{Proofs}
\subsection{Proof of Theorem~\ref{thm:RIPrivacy}}\label{sec:proof_RI_Privacy}
We first prove the following lemma.
\begin{lemma}\label{lem:sens_f}
    $\Helper$ has sensitivity $1$. Formally:
    \begin{equation}
    \max_{\substack{\Dataset, \Dataset'\in \mc{D}\\ \RIStep \in \{\StepSize, 2\StepSize, \dots, \lfloor \frac{\StepSize \cdot \DomainSize}{\StepSize} \rfloor \}}}\left|\Helper_\RIStep(\Dataset) - \Helper_\RIStep(\Dataset')\right|\leq 1
    \end{equation}
\end{lemma}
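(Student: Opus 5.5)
Fix neighbouring datasets $\Dataset,\Dataset'$ that differ by replacing one record $x$ with $x'$, and fix a step $\RIStep$ in the quantized domain. The median estimate $\NoisyMedian$ is treated as a fixed public value: it is already released by the $\epsilon_1$-DP median mechanism, so sensitivity is measured conditioned on it. I write $A_+(\Dataset)=\Rank{\NoisyMedian+\RIStep}-\Rank{\NoisyMedian}$ and $A_-(\Dataset)=\Rank{\NoisyMedian}-\Rank{\NoisyMedian-\RIStep}$. Since $\Rank{\cdot}$ is nondecreasing and $\RIStep>0$, both quantities are nonnegative, so the absolute values in $\HelperFunction{\RIStep}$ can be dropped. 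The key observation is that each is a count of data points in a half-open interval:
\[
A_+(\Dataset)=|\{d\in\Dataset: \NoisyMedian<d\leq \NoisyMedian+\RIStep\}|, \qquad A_-(\Dataset)=|\{d\in\Dataset: \NoisyMedian-\RIStep<d\leq \NoisyMedian\}|.
\]

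\textbf{Step 1 (each side changes by at most one).} Replacing $x$ by $x'$ removes one element and adds one. A count over a fixed interval therefore drops by at most one (if $x$ lay in the interval) and rises by at most one (if $x'$ does). The net change lies in $\{-1,0,1\}$, giving $|A_+(\Dataset)-A_+(\Dataset')|\leq 1$ and $|A_-(\Dataset)-A_-(\Dataset')|\leq 1$. This uses the bounded, replacement-based neighbouring notion. It also holds on the de-duplicated domain, provided the remapping of neighbours is itself treated as a single replacement, which is how \cite{sun2023confidence} handles it; I will note that as an assumption rather than prove it.

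\textbf{Step 2 (minimum preserves the Lipschitz bound).} I will use the elementary fact that $|\min(a,b)-\min(a',b')|\leq \max(|a-a'|,|b-b'|)$. To verify it, assume without loss of generality that $\min(a,b)\geq\min(a',b')$ and that $\min(a',b')=a'$. Then $\min(a,b)-a'\leq a-a'$, which is at most the right-hand side. Applying this with $a=A_+(\Dataset)$, $b=A_-(\Dataset)$ and the primed versions for $\Dataset'$ gives $|\HelperFunction{\RIStep}-\Helper_{\RIStep}(\Dataset')|\leq 1$. Since $\RIStep$ was arbitrary, taking the maximum over the quantized domain gives the lemma.

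\textbf{Main obstacle.} The arithmetic is routine; the subtle point is the conditioning on $\NoisyMedian$. One must be explicit that $\NoisyMedian$ is a fixed public input to $\Helper$ rather than something recomputed on $\Dataset'$. Otherwise $\Rank{\NoisyMedian}$ would shift together with the data and the bound would fail. I would state this at the outset and justify it by sequential composition: the RI mechanism is an $\epsilon_2$-DP mechanism run on input $(\Dataset,\NoisyMedian)$ for each fixed $\NoisyMedian$. With that in place, Theorem~\ref{thm:RIPrivacy} follows directly: subtracting the constant $\EMBound_1+\EMBound_2+\StepSize\Lipshitz$ and taking $-|\cdot|$ are both $1$-Lipschitz operations, so $\RIUtilSymbol$ also has sensitivity $1$.
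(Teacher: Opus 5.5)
Your proof is correct and follows essentially the same route as the paper. Each of the two rank differences counts data points in a fixed half-open interval adjacent to $\NoisyMedian$, so it changes by at most one under a replacement, and the minimum inherits this bound; your explicit inequality $|\min(a,b)-\min(a',b')|\leq\max(|a-a'|,|b-b'|)$ makes rigorous the paper's informal ``the min either remains unchanged or swaps'' step, and you are right that monotonicity of $\Rank{\cdot}$ makes the paper's w.l.o.g.\ ordering automatic.
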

\begin{proof}
Recall $\HelperFunction{\RIStep} = \min(| \Rank{\NoisyMedian+\RIStep} - \Rank{\NoisyMedian} |,  | \Rank{\NoisyMedian} - \Rank{\NoisyMedian-\RIStep} |)$.
w.l.o.g assume that $\Rank{\NoisyMedian - \RIStep} \leq \Rank{\NoisyMedian} \leq \Rank{\NoisyMedian + \RIStep}$.
Then, the term $|\Rank{\NoisyMedian - \RIStep} - \Rank{\NoisyMedian}|$ is equivalent to the number of data points that fall in the interval $(\NoisyMedian - \RIStep, \NoisyMedian]$. Replacing a data point can, in the worst case, move a point in or out of this interval, implying a sensitivity of one.
A symmetric argument holds for $|\Rank{\NoisyMedian + \RIStep} - \Rank{\NoisyMedian}|$.
Finally, we consider the min. The min either remains unchanged or the min swaps under the replacement of a record. 
In either case, the output can change by at most one, as both terms can change by at most one.
\end{proof}

Now we can prove the Theorem:
\RIPrivacy*
\begin{proof}
   The sensitivity of $\RIUtilSymbol$ is equivalent to the sensitivity of $\Helper$ since all other terms are constant over neighbouring datasets.
   Then, by the privacy properties of the exponential mechanism, the result follows.
\end{proof}

\subsection{Proof of Theorem~\ref{thm:RICorrect}}\label{sec:correctProof}

\RICorrect*
\begin{proof}
With probability $1-\beta_1$ since $o$ is the output of the exponential mechanism, we have:
\begin{equation}\label{eqn:textbook_utility}
    \MedianUtility{\NoisyMedian} \geq \MedianUtility{\MaxUtility}-\EMBound_1
\end{equation}
subbing in the definition of the median utility function (\ref{eq:median_utility}), gives the following
\begin{equation}
    |\Rank{\NoisyMedian} - \DataSize/2| \leq \EMBound_1
\end{equation}
which implies
\begin{equation}\label{eq:em_statment}
\Rank{\NoisyMedian}-\EMBound_1 \leq \DataSize/2 \leq \Rank{\NoisyMedian} + \EMBound_1.
\end{equation}

For a given output $\hat{\RIStep}$ of the second exponential mechanism, with probability $1-\beta_2$, we have the following statement 
\begin{equation}\label{eqn:textbook_utility_2}
    \RIUtility{\hat{\RIStep}} \geq \RIUtility{\MaxRIUtility} -\EMBound_2 -\StepSize\cdot\Lipshitz
\end{equation}
which gives
\begin{equation}\label{eq:utility_bound_2}
    |\HelperFunction{\hat{\RIStep}} - \EMBound_1 - \EMBound_2 -\StepSize\cdot\Lipshitz| \leq \EMBound_2 + \StepSize\cdot\Lipshitz.
\end{equation}

W.l.o.g assume that $\Rank{\NoisyMedian - \hat{\RIStep}}\leq \Rank{\NoisyMedian} \leq \Rank{\NoisyMedian + \hat{\RIStep}}$. 
We consider two cases $\Rank{\NoisyMedian} \geq \DataSize/2$ and $\Rank{\NoisyMedian} < \DataSize/2$.
In the first case, if $\Rank{\NoisyMedian} \geq \DataSize/2$ then by assumption we have $\Rank{\NoisyMedian + \hat{\RIStep}} \geq \Rank{\NoisyMedian} \geq \DataSize/2$. We must show $\Rank{\NoisyMedian - \hat{\RIStep}}<\DataSize/2$.
Expanding Eqn~\ref{eq:utility_bound_2} we get the following
\begin{equation}
    -\EMBound_2 - \StepSize\cdot\Lipshitz \leq \HelperFunction{\hat{\RIStep}} - \EMBound_1 - \EMBound_2 -\StepSize\cdot\Lipshitz
\end{equation}
where the absolute value is omitted following our initial assumption. Then we get 
\begin{equation}
    -\EMBound_2 - \StepSize\cdot\Lipshitz \leq \Rank{\NoisyMedian} - \Rank{\NoisyMedian - \hat{\RIStep}} - \EMBound_1 - \EMBound_2 -\StepSize\cdot\Lipshitz.
\end{equation}
We note that this holds regardless of which term is the min in $\HelperFunction{\hat{\RIStep}}$.
Then rearranging and subbing in Eqn~\ref{eq:em_statment}, we get
\begin{equation}
    \Rank{\NoisyMedian - \hat{\RIStep}} \leq \Rank{\NoisyMedian} - \EMBound_1 \leq \DataSize/2.
\end{equation}

In the second case, if $\Rank{\NoisyMedian} < \DataSize/2$ then by assumption we have $\Rank{\NoisyMedian - \hat{\RIStep}} \leq \Rank{\NoisyMedian} < \DataSize/2$. We must show $\Rank{\NoisyMedian + \hat{\RIStep}}>\DataSize/2$.
Recall that expanding Eqn~\ref{eq:utility_bound_2} we get the following
\begin{equation}
    -\EMBound_2 - \StepSize\cdot\Lipshitz \leq \HelperFunction{\hat{\RIStep}} - \EMBound_1 - \EMBound_2 -\StepSize\cdot\Lipshitz.
\end{equation}
Then we get 
\begin{equation}
    -\EMBound_2 - \StepSize\cdot\Lipshitz \leq \Rank{\NoisyMedian + \hat{\RIStep}} - \Rank{\NoisyMedian} - \EMBound_1 - \EMBound_2 -\StepSize\cdot\Lipshitz.
\end{equation}
We similarly note that this holds regardless of which term is the min in $\HelperFunction{\hat{\RIStep}}$.
Finally, rearranging and subbing in Eqn~\ref{eq:em_statment}, we get
\begin{equation}
    \Rank{\NoisyMedian + \hat{\RIStep}} \geq \Rank{\NoisyMedian} + \EMBound_1 \geq \DataSize/2.
\end{equation}
Taking the union bound over the probability of Eqn~\ref{eqn:textbook_utility} and Eqn~\ref{eqn:textbook_utility_2}, the result follows.
\end{proof}

\subsection{Epsilon Splitting Analysis}\label{sec:epsilonsplitting}
We minimize the following error
\begin{equation}
\EMBound_1 + \EMBound_2 + \StepSize\cdot \Lipshitz
\end{equation}
using Lagrange multipliers with the condition that $\epsilon_1 +\epsilon_2 = \epsilon$.
Let
\begin{equation}
    L = \frac{2}{\epsilon_1} \log{\frac{\DomainSize}{\beta_1}} + \frac{2}{\epsilon_2} \log{\frac{\DomainSize}{\StepSize\beta_2}} + \StepSize\cdot \ell + \lambda(\epsilon_1 + \epsilon_2 - \epsilon).
\end{equation}
Then, differentiating, we get 
\begin{equation}
    \frac{\partial L}{\partial \epsilon_1} = -\frac{2}{\epsilon_1^2} \log{\frac{\DomainSize}{\beta_1}} + \lambda
\end{equation}
\begin{equation}
    \frac{\partial L}{\partial \epsilon_2} = -\frac{2}{\epsilon_2^2} \log{\frac{\DomainSize}{\StepSize\beta_2}} + \lambda.
\end{equation}
Setting these equal and rearranging gives us the following:
\begin{equation}
    \epsilon_1^2 = \epsilon_2^2 \cdot \frac{\log{\frac{\DomainSize}{\beta_1}}}{\log{\frac{\DomainSize}{\StepSize\beta_2}}}
\end{equation}
Taking the square root gives us the final result:
\begin{equation}
    \epsilon_1 = \epsilon_2\sqrt{\frac{\log{\frac{\DomainSize}{\beta_1}}}{\log{\frac{\DomainSize}{\StepSize\beta_2}}}}
\end{equation}

\subsection{Optimal Step Size Analysis} \label{sec:optimalstepsize}
Let us assume that the RI width of our algorithm is $\EMBound_1 +\EMBound_2 + \StepSize\cdot \Lipshitz$. We assume all other variables are constant.
Assume we always conduct queries for $b \in \{\StepSize, 2\StepSize, \dots \lfloor \frac{\StepSize\DomainSize}{\StepSize}\rfloor $.
Then, ignoring constants (the median error), the width of the RI is
\begin{equation}
    \EMBound_2 + \StepSize\cdot \ell = \frac{2\RISensitiv}{\epsilon_2} \log{\frac{\DomainSize}{\StepSize\beta_2}} + \StepSize\cdot \ell
\end{equation}
The first derivative of this w.r.t. $\StepSize$ is
\begin{equation}
    -\frac{2\RISensitiv}{\epsilon_2\StepSize} + \ell
\end{equation}
Solving for $\StepSize$ gives the optimal way to set this parameter
\begin{equation}
    \StepSize = \frac{2\RISensitiv}{\epsilon_2 \ell}
\end{equation}
where $\ell$ is the Lipschitz bound on the RI utility function.

\section{Utility Analysis}\label{sec:utility_analysis}
To compare the utility to that of Sun et al.~\cite{sun2023confidence}, we first state the utility bound from their paper. Namely, for any $y\in[L(D), R(D)]$ (in the RI), they show that 
\begin{equation}
    \MedianUtility{y} \geq \MedianUtility{\MaxUtility}- \frac{17\MedianSensitiv}{\epsilon} \log{\frac{2\DomainSize}{\beta}} - 2\Lipshitz
\end{equation}

To roughly compare the utilities of the work, we first define a constant term of factors common to both approaches.
\begin{equation}
    \eta = \frac{2\MedianSensitiv}{\epsilon} \log{\frac{2\DomainSize}{\beta}}
\end{equation}

Using $\eta$, we make the following observations.
\begin{claim}\label{claim:equalRI}
     Assuming the best case where the exponential mechanisms return the candidates with optimal utility, the width of the RI for both approaches will be approximately $8\eta$
\end{claim}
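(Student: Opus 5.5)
Both RI widths are measured in the rank/utility domain, and I will put both into the common form ``constant $\times\,\eta$,'' dropping lower-order additive terms. The comparison is rough, so I take $\beta_1=\beta_2=\beta/2$, $\epsilon_1=\epsilon_2=\epsilon/2$ (the default split), and $\MedianSensitiv=\RISensitiv=\Lipshitz=1$. I also replace $\log\frac{\DomainSize}{\StepSize\beta_2}$ by the upper bound $\log\frac{2\DomainSize}{\beta}$, which holds because $\StepSize\geq 1$.

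\textbf{Step 1: \oursystem.} In the best case the second exponential mechanism returns $\hat{\RIStep}=\MaxRIUtility$, i.e.\ the $\RIStep$ with $\RIUtility{\RIStep}\approx 0$. Then $\HelperFunction{\hat{\RIStep}}\approx\EMBound_1+\EMBound_2+\StepSize\cdot\Lipshitz$, so each half of the interval has rank length about $\EMBound_1+\EMBound_2+\StepSize\Lipshitz$. With $\epsilon_1=\epsilon/2$,
\[
\EMBound_1=\frac{4}{\epsilon}\log\frac{2\DomainSize}{\beta}=2\eta,
\]
and likewise $\EMBound_2\le 2\eta$. The quantization term is $\StepSize\Lipshitz=4/\epsilon$ by (\ref{eqn:optimalStep}), which is lower order compared to $\eta$ once $\log\frac{2\DomainSize}{\beta}$ is large. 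So each side is about $4\eta$ and the full width $2\hat{\RIStep}$ in rank is about $8\eta$.

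Because \oursystem's interval is symmetric in the data domain but $\Helper$ takes the minimum over the two sides, I will state the width as ``rank coverage on each side $\approx 4\eta$.'' This matches the quantity Sun et al.\ bound.

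\textbf{Step 2: Sun et al.} Their guarantee says every $y\in[L(D),R(D)]$ has utility loss at most $\frac{17}{\epsilon}\log\frac{2\DomainSize}{\beta}+2\Lipshitz = 8.5\eta+2$. So the interval spans ranks within about $8.5\eta$ on each side of $\DataSize/2$. In the best case the endpoints sit where their guarantee is tight. The factor $17$ is a worst-case constant from their analysis of two mechanisms plus the slack needed for coverage. I would reopen how their bound decomposes, expecting the best-case width, with estimation noise set to zero, to be the coverage slack itself, roughly $8\eta$. Dropping the additive $2\Lipshitz$ and rounding $8.5$ to $8$ then gives ``approximately $8\eta$.''

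\textbf{Main obstacle.} The hard step is this second half: Sun et al.'s constant must be reconciled to the same $8\eta$ rather than the $17\eta$ a naive doubling would give. That requires identifying which part of their bound is the interval width and which part is the error of locating it. If that decomposition cannot be cleanly extracted, I would fall back to saying both widths are $\Theta(\eta)$ with constants $8$ and about $8.5$. That is consistent with ``approximately'' and with the experiments in Table~\ref{tab:comparison}.
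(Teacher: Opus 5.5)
Your analysis of \oursystem is the same as the paper's: with the default split, $\EMBound_1\approx\EMBound_2\approx 2\eta$, the quantization term is dropped, and an optimal selection gives $\HelperFunction{\hat{\RIStep}}\approx 4\eta$, so the width is $8\eta$. The gap is in the half for Sun et al. You start from their accuracy guarantee $\frac{17\MedianSensitiv}{\epsilon}\log\frac{2\DomainSize}{\beta}+2\Lipshitz$. That is a worst-case bound which already includes the error of the exponential mechanism choosing each endpoint, so it cannot locate the endpoints in the best case.

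Your closing step is also off by a factor of two. The guarantee bounds the rank distance from the median of every point of the interval, so $8.5\eta$ is a per-side quantity. Taken at face value it yields a width near $17\eta$, and ``rounding $8.5$ to $8$'' does not produce $8\eta$. Your fallback comparison of constants $8$ versus $8.5$ has the same problem. Your instinct is right that the best-case width should be the coverage slack with the estimation noise removed, but you never identify that slack.

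The missing step is to look at the utility function Sun et al.\ maximize rather than at their guarantee. Each endpoint is chosen by an exponential mechanism with utility $\MedianUtility{y}-\MedianUtility{\MaxUtility}-s-\Lipshitz$. Here $\MedianUtility{\MaxUtility}=0$, and $s$ is their offset parameter, unrelated to the quantization step of \oursystem. So each endpoint targets a point at rank distance $s+\Lipshitz$ from the median. Their code, which the paper follows (Appendix~\ref{sec:implementation}), sets $s=\frac{8\MedianSensitiv}{\epsilon}\log\frac{2\DomainSize}{\beta}=4\eta$. An optimal selection therefore puts each endpoint about $4\eta$ from the median once $\Lipshitz=1$ is ignored, giving width $8\eta$.

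This also resolves the decomposition you were after. The constant $17$ is consistent with the offset (constant $9$ in their paper) plus a worst-case selection error of about $\frac{8\MedianSensitiv}{\epsilon}\log\frac{2\DomainSize}{\beta}=4\eta$ per endpoint. That error is exactly what the best-case assumption removes, and what the paper's next claim adds back to obtain a worst-case width of $16\eta$.
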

\begin{proof}[Proof Sketch]   
    The utility function of Sun et al. is 
    $\MedianUtility{y} - \MedianUtility{\MaxUtility} - s - \ell$, where $\MedianUtility{\MaxUtility}=0$, $s\approx4\eta$, and we will ignore $\ell=1$. This means the distance from the median to the lower bound is approximately $4\eta$. The upper bound is similar, resulting in a total width of $8\eta$.
    
    Our utility function is $|\HelperFunction{\hat{\RIStep}} - \EMBound_1 - \EMBound_2 -\StepSize\cdot\Lipshitz|$
    where $\EMBound_1 \approx 2\eta$, $\EMBound_2 \approx 2\eta$, and we ignore the $\ell$ term. This gives an optimal helper function $\Helper$ of $4\eta$ (the worst-case distance to either the upper or lower bound), which implies a total width of $8\eta$.
\end{proof}

\begin{claim}\label{claim:worstWidth}
    Assuming the worst case where the exponential mechanisms return the candidates with the worst possible utility defined by Theorem~\ref {thm:median_utility}. Then Sun et al. have an RI width of $16\eta$ and \oursystem has an RI width of $12\eta$.
\end{claim}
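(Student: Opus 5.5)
The plan is to argue in the same heuristic, rank-domain style as the proof sketch of Claim~\ref{claim:equalRI}. We measure widths in units of $\eta$, treat $\epsilon_1=\epsilon_2=\epsilon/2$ so that $\EMBound_1\approx\EMBound_2\approx 2\eta$, and ignore the $\ell$ and $\StepSize\cdot\Lipshitz$ terms. In the worst case, every exponential mechanism returns a candidate whose utility is exactly the bound of Theorem~\ref{thm:median_utility} below the optimum. We then compute the resulting half-widths on each side and add them.

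For Sun et al., we start from their utility function $\MedianUtility{y}-\MedianUtility{\MaxUtility}-s-\ell$ with $s\approx 4\eta$. In the best case (Claim~\ref{claim:equalRI}) each endpoint sits at rank distance $4\eta$ from the median. In the worst case, the selected endpoint's utility is off by the exponential-mechanism slack. With their budget split, this slack is $\frac{2\MedianSensitiv}{\epsilon'}\log\frac{2\DomainSize}{\beta}$ for a per-endpoint budget $\epsilon'\approx\epsilon/2$, which contributes another $\approx 4\eta$. Each endpoint can therefore lie $8\eta$ from the median, giving $16\eta$ in total. Consistently with this, their stated bound $\frac{17\MedianSensitiv}{\epsilon}\log\frac{2\DomainSize}{\beta}\approx 8.5\eta$ per side matches up to constants.

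For \oursystem, we use the two guarantees already derived in the proof of Theorem~\ref{thm:RICorrect}. Eqn~\ref{eq:em_statment} places $\NoisyMedian$ within rank distance $\EMBound_1\approx 2\eta$ of the true median. Eqn~\ref{eq:utility_bound_2} gives $\HelperFunction{\hat{\RIStep}}\le \EMBound_1+2\EMBound_2\approx 6\eta$. Taking the worst case, $\hat{\RIStep}$ covers $6\eta$ ranks on each side of $\NoisyMedian$, so the interval $[\NoisyMedian-\hat{\RIStep},\NoisyMedian+\hat{\RIStep}]$ has rank width $12\eta$. The error of the median itself does not add to the width, because the interval is centered at $\NoisyMedian$; it only shifts where the interval sits.

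The main obstacle is that $\Helper$ takes a minimum over the two sides, so bounding it only controls the shorter side; the longer side could in principle extend further in rank. I would handle this as the paper does elsewhere, by assuming symmetric data density near the median, which is in keeping with the approximate nature of Claim~\ref{claim:equalRI}. I would also state explicitly that "width" is measured in rank units under the 1-Lipschitz assumption.
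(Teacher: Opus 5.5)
\textbf{Gap: the Sun et al.\ slack is off by a factor of $2$.} With per-endpoint budget $\epsilon/2$ and sensitivity $\MedianSensitiv$, the slack you write is $\frac{2\MedianSensitiv}{\epsilon/2}\log\frac{2\DomainSize}{\beta}=\frac{4\MedianSensitiv}{\epsilon}\log\frac{2\DomainSize}{\beta}=2\eta$, not $4\eta$. Taken at face value, your computation gives $4\eta+2\eta=6\eta$ per endpoint and $12\eta$ in total. That is the same as \oursystem, so the separation the claim asserts disappears.

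The paper reaches a per-endpoint error of $4\eta$ through \emph{two} factors of $2$. The first comes from splitting $\epsilon$ over the lower and upper bound, which you have. The second comes from Lemma~3.3 of Sun et al. One plausible source is that the utility you quote subtracts the data-dependent term $\MedianUtility{\MaxUtility}$, which in general doubles the sensitivity.

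Your consistency check with the constant $17$ actually exposes this slip rather than confirming your derivation. Sun et al.'s offset in their paper is $s=\frac{9\MedianSensitiv}{\epsilon}\log\frac{2\DomainSize}{\beta}$. The remaining $\frac{8\MedianSensitiv}{\epsilon}\log\frac{2\DomainSize}{\beta}=4\eta$ must therefore be the mechanism slack, twice what your formula produces. You need to supply that second factor explicitly for the $16\eta$ to follow.

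\textbf{The \oursystem half is fine.} It is the paper's argument. $\Helper$ targets $\EMBound_1+\EMBound_2\approx 4\eta$, and the second mechanism can miss by $\EMBound_2\approx 2\eta$. That gives $6\eta$ per side and $12\eta$ in total, and the median error does not enter because the interval is centred at $\NoisyMedian$.

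Your caveat about the minimum in $\Helper$ is legitimate, and the paper's sketch passes over it. Since $\Helper$ is a minimum, $6\eta$ only pins down the shorter side, so the rank width is at least $12\eta$ and equals it only under some symmetry. However, the paper does not invoke a symmetric-density assumption anywhere; it simply reads $\Helper$ as the distance to either endpoint. That assumption would be yours to add, not something you can cite.
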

\begin{proof}[Proof Sketch]  
    Building on Claim~\ref{claim:equalRI}, we can assume the error of each application of the exponential mechanism in Sun et al. is $4\eta$ by applying the utility bound of the exponential mechanism with an extra factor of 2 from splitting epsilon over the lower and upper bound and another 2 from Lemma 3.3 of Sun et al.~\cite{sun2023confidence}.
    This means the lower bound estimates the distance of $4\eta$ from the median with at worst a $4\eta$ error, resulting in a total distance of $8\eta$. Similarly, for the upper bound, giving the total width of $16\eta$.
    
    Our work builds the RI using the estimated median as the center. Then we estimate a width of $4\eta$ (as discussed in Claim~\ref{claim:equalRI}) with error at most $\EMBound_2 \approx 2\eta$. This leads to a total distance of $6\eta$  per side and an overall distance of $12\eta$.
\end{proof}

\begin{claim}
    Assuming the worst case where the exponential mechanisms return the candidates with the worst possible utility defined by Theorem~\ref {thm:median_utility}. Then Sun et al. have a median error of $4\eta$ and \oursystem has a median error of $2\eta$.
\end{claim}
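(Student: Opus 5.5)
The claim compares worst-case rank errors of the two point estimates, measured in units of $\eta = \frac{2\MedianSensitiv}{\epsilon}\log\frac{2\DomainSize}{\beta}$. It is a proof sketch in the same spirit as Claims~\ref{claim:equalRI} and~\ref{claim:worstWidth}. I treat the two approaches separately and in each case plug the worst-case outcome allowed by Theorem~\ref{thm:median_utility} into the construction of the reported median. Constant factors are absorbed only up to ``$\approx$'', as done earlier. In particular, I identify $\log\frac{\DomainSize}{\beta_1}$ with $\log\frac{2\DomainSize}{\beta}$ when $\beta_1=\beta/2$.

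\emph{\oursystem.} The median $\NoisyMedian$ is the direct output of the exponential mechanism with budget $\epsilon_1$. In the default setting $\epsilon_1=\epsilon/2$, Theorem~\ref{thm:median_utility} gives, with probability $1-\beta_1$,
\[
\MedianUtility{\NoisyMedian} \ge \MedianUtility{\MaxUtility}-\EMBound_1, \qquad \EMBound_1=\frac{2\MedianSensitiv}{\epsilon_1}\log\frac{\DomainSize}{\beta_1}\approx 2\eta .
\]
Since $\MedianUtility{\MaxUtility}=0$, the worst case is a rank error of $2\eta$. This is exactly the value of $\EMBound_1$ already used in the proof of Claim~\ref{claim:equalRI}. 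The RI step never alters $\NoisyMedian$, so the median error stays $2\eta$.

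\emph{Sun et al.} Their median is the midpoint of $[L(D),R(D)]$. As in the proof of Claim~\ref{claim:worstWidth}, each endpoint targets a rank distance of $4\eta$ from the median and may be off by up to another $4\eta$. The worst case for the midpoint is when the two endpoint errors do not cancel. I take one endpoint overshooting outward by $4\eta$, so its rank distance is $8\eta$, and the other landing exactly at its target distance $4\eta$.

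Here I must be careful, because this naive placement gives a midpoint offset of only $2\eta$. The offset comes out as $4\eta$ only under a different accounting: the midpoint is taken in rank, and the construction guarantees for every $y$ in the RI only the bound $\MedianUtility{y}\ge \MedianUtility{\MaxUtility}-\frac{17\MedianSensitiv}{\epsilon}\log\frac{2\DomainSize}{\beta}$. Under that accounting, the midpoint inherits the per-endpoint exponential-mechanism error of $4\eta$. That error in turn comes from the factor $2$ for splitting $\epsilon$ between the two bounds, times the factor $2$ from Lemma~3.3 of Sun et al.

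So I would argue as follows. Each endpoint is produced by an exponential mechanism with effective error $4\eta$. The worst case shifts both endpoints in the same direction by $4\eta$: for example, both true target points sit $4\eta$ from the median and both outputs slide $4\eta$ rightward while the interval still covers the median. The midpoint then moves by $4\eta$ in rank.

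\emph{Main obstacle.} The hard step is justifying this same-direction shift for Sun et al.'s construction, since their utility penalizes overshooting and undershooting asymmetrically. I would first check that both endpoint mechanisms can independently err by their full $4\eta$ toward the same side. If that fails, I would fall back on the cruder bound that any point of the RI has rank error at most $\approx\frac{17}{2}\eta$, and report $4\eta$ as the typical worst case. That would be consistent with the sketch-level nature of the claim.
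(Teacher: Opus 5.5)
Your argument is essentially the paper's. For \oursystem the median error is just $\EMBound_1\approx 2\eta$. For Sun et al., your ``same-direction shift'' is exactly the paper's worst case: one endpoint errs outward to rank distance $8\eta$, and the other errs inward by its full $4\eta$, cancelling its $4\eta$ target so it lands on the true median. The midpoint is therefore $4\eta$ away.

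Two parts of your sketch can be dropped. The intermediate ``different accounting'' via the $\frac{17}{2}\eta$ bound is unnecessary. The fallback is also unnecessary, and it would not yield $4\eta$ anyway. At this sketch level the paper simply assumes each endpoint can be off by $4\eta$ in either direction.
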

\begin{proof}[Proof Sketch]  
    Building on Claim~\ref{claim:worstWidth}, if we assume the lower bound is the worst possible distance of $8\eta$ and in the worst case, the upper bound is the smallest possible distance of 0 from the true median (estimating $4\eta$ distance with error $4\eta$ that cancel each other out). Then the average of the lower and upper bounds is $4\eta$ away from the true median.
    
    Our work simply estimates the median with error at most $\EMBound_1 \approx 2\eta$. 
\end{proof}


\section{Related Work}\label{sec:related_work}

There has been a significant amount of work done on expressing the privacy-utility trade-off present in any DP mechanism~\cite{Dwork06differentialprivacy} to a non-expert.  An approach can be to provide a set of benchmark utility scores for common privacy parameter settings and datasets~\cite{xia2021dpgraph,hay2016principled}. A more usable approach is to provide scores as an interactive visualization tool~\cite{liu2021catch,xia2021dpgraph,nanayakkara2022visualizing,panavas2024illuminating}, so that users can explore possible privacy settings.

Our work looks at expressing the privacy utility trade-off using randomization intervals. We note that although randomization intervals provide a bound on the DP noise with high confidence, they do not provide a bound on the variance from the data (sample variance). Most prior work on DP confidence intervals focuses on bounding this error without considering the error from the noise itself ~\cite{dwork2009differential,du2020differentially,covington2021unbiased,cohen2024lower,ligett2025differentially,chadha2024resampling}.

In our work, we treat finding bounds for a confidence intervals as a maximizing a utility function problem. This then leads us to using an exponential mechanism based design. Sun et al.\cite{sun2023confidence} also take this approach. They provide confidence intervals for the median and other statistics using both exponential mechanism and svt based approaches. Their approach first estimates valid upper and lower bounds for the statistic, before taking the average of the values. 

Drechsler et al.\cite{drechsler2022nonparametric} consider solutions that estimate confidence intervals using both the exponential mechanism and post-processing an estimated DP CDF. Their solutions aim to to account for both sources of randomness (sample error and privacy noise). However, in both of their approaches, they do not directly estimate the median itself.

\section{Implementation Details}\label{sec:implementation}
Our implementation is based on the code of Sun et al.~\cite{sun2023confidence}:  \url{https://github.com/PrivateCI/DP_CI}. We note that running their code unmodified did not reproduce the results Sun et al.'s paper exactly. We modified the domain size in their code to $\DomainSize=10^8$ to be consistent across both approaches. We also note that the parameter $s$ in Sun et al.'s work is set to $\frac{9\MedianSensitiv}{\epsilon} \log{\frac{2\DomainSize}{\beta}}$ in the paper, but $\frac{8\MedianSensitiv}{\epsilon} \log{\frac{2\DomainSize}{\beta}}$ in the code. We follow their code as it gives a lower error and matches the settings of our work. We also fixed a couple of off by one errors in Sun et al.'s code. Our code can be found at \url{https://github.com/Timliuw/Randomization-Intervals}.

\end{document}